\journal{Games and Economic Behavior}
\patchcmd{\emailauthor}{(#2)}{}{}{}
\patchcmd{\urlauthor}{(#2)}{}{}{}
\newtheorem{proposition}{Proposition}
\theoremstyle{definition}
\newtheorem*{lemma*}{Lemma}
\newcommand{\ttc}{\text{TTC}} 
\newcommand{\da}{\text{DA}} 
\newcommand{\rmm}{\text{RM}} 
\newcommand{\rsd}{\text{RSD}} 
\newcommand{\rk}{\textup{rk}} 
\newcommand{\EE}{\mathbb{E}}
\newtheorem*{theorem*}{Theorem}
\theoremstyle{plain} 
\newcommand{\thistheoremname}{}
\newtheorem*{genericthm*}{\thistheoremname}
\newenvironment{namedthm*}[1]
{\renewcommand{\thistheoremname}{#1}%
	\begin{genericthm*}}
	{\end{genericthm*}}
\begin{document}
	\begin{frontmatter}
		
		\title{The Cost of Strategy-Proofness in School Choice}

		\author[a1]{Josu\'e Ortega}\ead{j.ortega@qub.ac.uk (corresponding author).}		
		\author[a2,a3]{Thilo Klein}

		\address[a1]{Queen’s University Belfast, UK.}	
		\address[a2]{ZEW -- Leibniz Centre for European Economic Research, Mannheim, Germany.}
		\address[a3]{Pforzheim University, Germany.}

		\date{\today}

		\begin{abstract}
			
		We compare the outcomes of the most prominent strategy-proof and stable algorithm (Deferred Acceptance, DA) and the most prominent strategy-proof and Pareto optimal algorithm (Top Trading Cycles, TTC) to the allocation generated by the rank-minimizing mechanism (RM).
			While one would expect that RM improves upon both DA and TTC in terms of rank efficiency, the size of the improvement is nonetheless surprising. 
			Moreover, while it is not explicitly designed to do so, RM also significantly improves the placement of the worst-off student. Furthermore, RM generates less justified envy than TTC.	
			We corroborate our findings using data on school admissions in Budapest. 
		\end{abstract}
		
		\begin{keyword}
		school choice \sep rank-minimizing \sep random matching markets.\\
			
			{\it JEL Codes:} C78, D73.
			
		\end{keyword}
	\end{frontmatter}

\setcounter{footnote}{0}

\parskip 12pt
\section{Introduction}

School choice is a common way to assign students to schools based on the students' and schools' preferences. Students and schools rank their potential matches and submit this information to a centralized clearinghouse. Afterwards, an algorithm (also known as a mechanism) is applied to the submitted data and an allocation of students to schools is generated.

But which mechanism should we use to assign students to schools? Several economists have argued that a key criterion is that such mechanism must be strategy-proof, i.e. it should not give incentives to students to misrepresent their preferences. Strategy-proofness is a desirable property because it levels the playing field across sophisticated and unsophisticated applicants, while at the same time makes it possible for education authorities to provide clear advice on how to rank schools. The student-proposing deferred acceptance (DA) and top trading cycles (TTC) algorithms have been proposed and implemented in real-life  largely because both are strategy-proof \citep{abdulkadirouglu2003}. 

At the same time, strategy-proofness is a costly property that is incompatible with a variety of other desiderata, and often leads to efficiency losses.\footnote{\cite{abdulkadirouglu2009strategy} document that around 2\% of students in the New York City high school match could be assigned to a more desirable school with a non strategy-proof mechanism without affecting  the placement of other  students. See \cite{abdulkadiroglu2022school} for a summary of impossibility results on combining strategy-proofness with other desiderata.} In this paper, we aim to quantify the cost of strategy-proof mechanisms in terms of efficiency and equality, both in theory and in practice. To do so, we compare the expected outcomes of the most prominent strategy-proof and stable algorithm (DA) and the most prominent strategy-proof and Pareto optimal algorithm (TTC) to an allocation generated by the rank-minimizing mechanism (RM, \citealp{featherstone2020rank}), which selects an allocation that minimizes the average rank of schools to which students are assigned, without taking schools' priorities into account. We emphasize that RM is neither strategy-proof or stable. On the other hand, our results show that i) RM assigns the average student to a school they prefer more (i.e. it is more efficient)\footnote{Throughout the paper, we write efficiency to refer to rank-efficiency, which is a stronger notion than ordinal efficiency and Pareto efficiency \citep{featherstone2020rank}.}, and ii) RM assigns the worst-off student to a school that they prefer \emph{much} more (i.e. it is \emph{much} more egalitarian). 

In particular, if there are $n$ students and $n$ schools with one seat each, and preferences for both sides are drawn uniformly at random, TTC and DA asymptotically assign the average student to approximately their $\log(n)$ most preferred school, whereas RM  assigns them to a school better than their second choice, independently of the size of $n$. If we focus on the worst placement, rather than the average, the difference is even bigger: RM assigns the worst-off student to his $\log_2(n)$ most preferred school, whereas DA assigns him to his $\log^2(n)$ most preferred school. TTC does even worse, assigning some student to a school in the bottom half of his preference  list (see Fig. \ref{fig:histogram} for the rank distribution). 
\begin{figure}[h!]
	\centering
	\includegraphics[width=\textwidth]{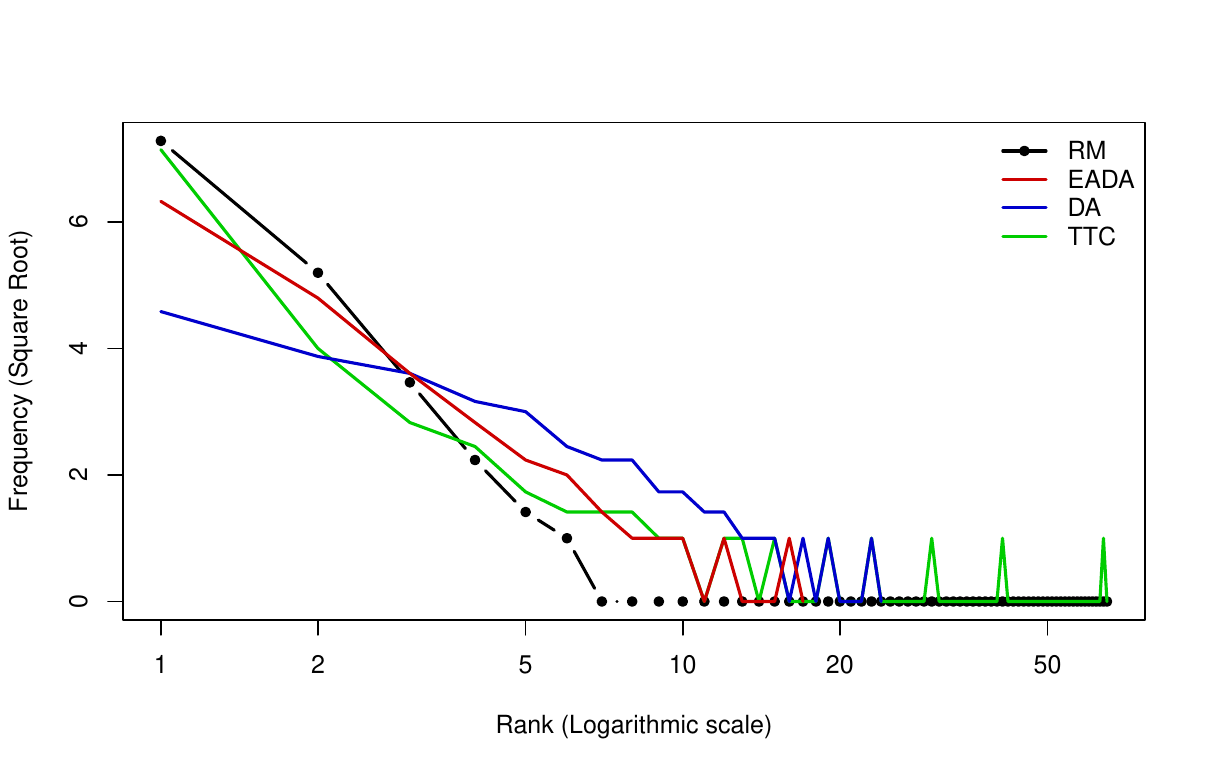} 
	\caption{Rounded average rank distribution in 1,000 iid random markets with $n=100$. \tiny The rank vector obtained in each market is sorted in descending order, and then we compute the average  across all markets coordinate-wise (results are rounded to the nearest integer). The x-axis is truncated at the highest value with positive density. EADA refers to Kesten's efficiency adjusted DA. See Appendix B for details.}
	\label{fig:histogram}
\end{figure}

Our results documenting the inefficiency and inequality in TTC extend to any other Pareto optimal and strategy proof mechanisms in large markets (including random serial dictatorship) because all such mechanisms produce asymptotically the same rank distribution  \citep{pycia2019evaluating}. Similarly, because student-proposing DA produces the best stable matching for students, any other strategy-proof and stable mechanism would generate an allocation with a higher (i.e. worse) average and maximum rank.

RM is Pareto optimal for the students, unlike DA, and generates justified envy for fewer students than TTC, which is surprising because RM does not use schools' priorities but TTC does. We prove these properties for random one-to-one markets where preferences are drawn independently and uniformly at random  (see Table \ref{tab:maintable}), and document them by analyzing real data from the many-to-one student assignment system in Budapest, in which preferences are highly correlated.

Throughout the paper, we assume for simplicity that students report their preferences truthfully in the RM mechanism, which is not strategy-proof. Thus, our results can be interpreted as the cost of strategy-proofness compared to an ideal first-best scenario that may not able attainable. Nonetheless, we document that the rank distribution in RM changes only minimally when large shares of students misrepresent their preferences, and therefore we conjecture that strategic behavior is unlikely to significantly alter the average and maximum ranks in RM.

Moreover, RM is not obviously manipulable \citep{troyan2022non}, meaning that cognitively limited subjects may not be able to find how to manipulate this mechanism successfully. A growing literature has focused on not-obviously manipulable mechanisms, especially the efficiency-adjusted deferred acceptance (EADA) mechanism \citep{kesten2010school}, which is more efficient than DA in theory, and more truthful and efficient than DA in the lab \citep{cerrone2022school}. We compare EADA versus RM, DA, and TTC empirically and in simulations, and find that it improves the efficiency of DA, although not as much as RM, without improving on DA's inequality. Computing the average and maximum rank generated by EADA in random markets remains a challenging open question.\footnote{During the review process of this article, \cite{ortegaziegler} have computed the average rank for EADA. See the note before the acknowledgements for details.}
\begin{table}[h!]
	\centering
	\caption{Theoretical properties of school choice mechanisms in large random markets.}
	\label{tab:maintable}
	\begin{tabular}{lccc}
		\toprule
		& RM & TTC    & DA\\
		\midrule
		Average rank& $\leq 2$ & $\log(n)$&$\log(n)$\\	
		Maximum rank & $\log_2(n)$&$> 0.5 \, n$&$\log^2(n)$\\
		Students w. justified envy &  $0.33 \, n$ &	$0.39 \, n$	 &0\\
Pareto optimal & Yes & Yes & No\\
Not obviously manipulable & Yes & Yes & Yes\\
Strategy-proof & No & Yes & Yes\\

		\bottomrule
	\end{tabular}
\end{table}

\section{Related Literature}

The question of which mechanism should be used to assign students to schools has been frequently asked. The answer to this question in the market design literature is that all frequently used mechanisms generate equivalent rank distributions. This equivalence has been established theoretically for a wide class of mechanisms \citep{che2018payoff, pycia2019evaluating} and empirically using  real-life data \citep{abdulkadirouglu2009strategy, pathak2013school, che2018payoff, aeri}. Our paper challenges the literature consensus by showing that Pareto optimal mechanisms are not equivalent, as can be observed in Fig. \ref{fig:histogram}.\footnote{A two-sample Kolmogorov-Smirnov test rejects the null-hypothesis that any two distributions in Fig. \ref{fig:histogram} are the same at the 1\% significance level.} Three reasons explain the discrepancy between our results and those in the literature, namely i) different model specifications,  ii) we consider a non-strategy proof mechanism such as RM, and iii) RM has not been used in empirical studies. We explain these differences in detail below.

The closest paper to ours is \cite{che2018payoff}. Using a random market approach, they show that the normalized payoff distribution generated by any Pareto optimal mechanism is asymptotically equivalent. Furthermore, they compare the rank distribution generated by DA and TTC (but not RM) using data from the New York City school choice program. The main lesson from their paper is that all Pareto optimal mechanisms are equivalent in large markets, and therefore there is no reason to prefer any Pareto optimal mechanism over another. Our paper shows that the equivalence between Pareto optimal mechanisms breaks down once i) ranks are used instead of normalized payoffs, and ii) students are allowed to rank all available schools, rather than just a few. 

\cite{pycia2019evaluating} obtains a similar equivalence result to that of Che and Tercieux: he shows that any anonymous statistics, such as rank distribution, generated by Pareto efficient and strategy-proof mechanisms are equivalent in large markets.\footnote{Pyicia's result builds on a previous, more general result by \cite{liu2016}.} This implies that all of our results for TTC's poor performance with regards to efficiency and equality also apply to a wide number of mechanisms, including the random serial dictatorship mechanism (RSD),  which ``{\it has a long history and is used in a wide variety of practical allocation problems, including school choice, worker assignment, course allocation, and the allocation of public housing}'' \citep{pycia2021theory}. 

To show that RM is more efficient than DA and TTC, we connect the school choice problem to that of assigning one of $n$ jobs to each of $n$ workers so to minimize costs.\footnote{A large literature in mathematics, uncited in economics, has studied this problem. See \cite{olin1992asymptotic} and \cite{krokhmal2009random} for a summary of it.} Worker $i$ incurs a cost $c_{ij}$ when completing job $j$. The matrix $C$ contains all such costs. When each row of $C$ is an independent random permutation of $\{1, \ldots, n\}$, this problem is equivalent to that of finding the rank-minimizing allocation of students to schools, ignoring schools' priorities. Each entry $c_{ij}$ denotes the rank (cost) of school (job) $j$ for student (worker) $i$. To show that the RM is more efficient than TTC and DA, we invoke a result in \cite{parviainen_2004} which shows that the cost-minimizing allocation has an average cost smaller than 2, and compare it with the well-known average rank in TTC and DA, which is around $\log(n)$. Obtaining the maximum rank lower bound and the fraction of students with justified envy is easy using the limit distribution of ranks in RM, which is also provided by Parviainen.

The result of average rank being bounded in RM was recently independently discovered by \cite{nikzad2022rank}, who provides a bound of 7.75 (ours is 2). His proof uses random graphs and is different (and significantly more involved) than ours. \citet{sethuramannote} shows that Nikzad's bound can be improved to 2 using the cost assignment problem with costs distributed in $(0,1)$ \citep{aldous2001zeta}, without using Parviainen's result.  These papers do not study the maximum rank and justified envy in RM, TTC and DA, and do not analyse the performance of these three mechanisms using real-life data in which preferences are correlated.

The RM mechanism was first studied in economics by \cite{featherstone2020rank}. He documents that RM has been used in practice to assign teachers to schools in the US, and shows that any selection of the RM mechanism cannot be strategy-proof. Nonetheless, he shows that truth-telling is a best response in RM when students have little information about other students' preferences and do not truncate their preference list. He shows that a rank-efficient allocation must be ordinally efficient, but the converse is not necessarily true. He also shows that an inefficient assignment can converge to the RM outcome by performing local swaps. \cite{troyan2022non} has recently shown that RM is not obviously manipulable, meaning that although potential manipulations exist, they cannot be recognized by cognitively limited agents. Therefore, RM has better incentives properties than the well-known Boston mechanism, which is obviously manipulable.

The fact that DA is inefficient is well-known: \cite{kesten2010school} shows that, in a worst-case scenario, it may assign each student to her worst or second-worst school. We show that DA is also inefficient in an average-case scenario. The inefficiency of TTC is less known, partially because the matching literature often focuses on the weaker efficiency notion of Pareto optimality. Nonetheless, \cite{manea2009asymptotic} has shown that the number of preference profiles for which RSD is ordinally efficient (a weaker notion than rank efficiency) vanishes when the number of agents grows. Our result complement his by showing that any Pareto optimal and strategy-proof mechanism not only rarely produces an ordinal efficient allocation (and thus rank-efficient), but also the size of its inefficiency does not vanish in large markets. To our knowledge, the inequality of both mechanisms has remained largely unstudied in the economics literature (with the very recent exception of \cite{galichon2023stable}, who document the inequality of DA for a specific class of preferences).

TTC minimizes justified envy among all Pareto optimal and strategy-proof mechanisms \citep{aeri}. Neither DA nor RM are in this class of mechanisms. We find theoretically that fewer students experience justified envy in RM than in TTC. In practice RM and TTC generate roughly the same amount of justified envy.

\section{Model}
We study a standard one-to-one\footnote{The one-to-one assumption is commonly used for simplicity in the literature (see \cite{roth1992two}) and is not crucial for our results. See the Appendix for a robustness exercise.} school choice market \citep{abdulkadirouglu2003}, which consists of:
\begin{enumerate}
	\item A set of students $T= \{1,\ldots, n\}$,
	\item A set of schools $S= \{s_1,\ldots, s_n\}$, with each school having space for one student only,
	\item Strict students' preferences over schools $\succ \coloneqq (\succ_1,\ldots, \succ_n)$, and
	\item Strict schools' priorities over students $\triangleright \coloneqq (\triangleright_{s_1}, \ldots, \triangleright_{s_n})$.
\end{enumerate}

An allocation $x$ is a perfect matching between $T$ and $S$. We will denote by $x_t$ the school to which student $t$ is assigned, and by $x_s$ the student that school $s$ is assigned to. Student $t$ experiences justified envy in allocation $x$ if there exists a school $s$ such that $s \succ_t x_t$ and $t \triangleright_{s} x_s$.

The function $\rk_t(x_t)$ returns an integer between 1 and $n$ corresponding to the ranking of $x_t$ in the preference list of student $t$, i.e. the most desirable option gets a ranking of 1, whereas the least desirable one gets a ranking of $n$.\footnote{There is a large literature that uses rank distributions as a welfare measure, e.g. \cite{knoblauch2009marriage, ashlagi2017, ortega2018, ortega2019}.} A mechanism is a map from school choice markets to (a probability distribution over) allocations. An allocation $x$ Pareto dominates a different allocation $y$ if, for every student $t$, $\rk_t(x_t) \leq \rk_t (y_t)$ and for some student $j$, $\rk_j(x_j) < \rk_j (y_j)$. An allocation is Pareto optimal if it is not Pareto dominated. A Pareto optimal mechanism returns a Pareto optimal allocation in every school choice problem.

We use $x^\rmm$ to denote one of the (possibly many) allocations that minimizes the sum of ranks for students, which we henceforth call rank-efficient or rank-minimizing. $X^\rmm$ denotes the set of all rank-efficient allocations. The rank-minimizing mechanism (henceforth RM) returns a rank-efficient allocation for every matching market. We assume that RM implements one among all rank-efficient allocations randomly.\footnote{Different rank-efficient allocations may have distinct maximum ranks and number of agents with justified envy. Our results concern the expected properties of rank-efficient allocations, rather than of specific realizations (although we find that the rank distribution changes minimally across efficient allocations; the variance in RM is conjectured to be very small, in the order of $1/n$ \citep{parviainen_2004}).}

Two other mechanisms are of interest. The first is top trading cycles (TTC), in which the following two steps are repeated until all agents have been assigned an object:
\begin{enumerate}
	\item Construct a graph with one vertex per student or school.  Each student (resp. school) points to their top-ranked school (resp. student) among the remaining ones. At least one cycle must exist and no two cycles overlap. Select the cycles in this graph.
	
	\item Permanently assign each student in a cycle to the school they point to. Remove all students and schools involved in a cycle.
\end{enumerate}

The second mechanism of interest is student-proposing deferred acceptance (DA). It works as follows:
\begin{enumerate}
	\item All unmatched students apply to their most preferred school that has not rejected them. Each school that has received a proposal puts the one sent by the highest priority student in a waiting list and permanently rejects all other received applications (if any). 
	\item Repeat step 1 until all schools have received at least one application. Assign each student to the school which has them on a waiting list.
\end{enumerate}

We use $x^\ttc$ and $x^\da$ to denote the allocation obtained by the TTC and DA mechanisms, respectively. Schools' priorities are used to compute TTC and DA, but are irrelevant in RM.

In this paper, we will focus on comparing DA and TTC to RM. However, we will also provide empirical results for another well-studied mechanism called efficiency-adjusted deferred acceptance (EADA, \citealp{kesten2010school}), which like RM, is not strategy-proof. In DA, a student $t$ is called an interrupter if he applies to a school $s$, causing another student to be rejected from $s$, but eventually being rejected himself from $s$ at a later round of DA. The efficiency-adjusted deferred acceptance (EADA) mechanism, suggested by \cite{kesten2010school}, runs DA and identified the last interrupter student to be rejected. The preferences of the interrupter student are then modified so that said school is no longer desired by the interrupter, and the DA is executed on the modified problem. The procedure is repeated until there are no more interrupters. EADA generates an allocation that is Pareto optimal and with a weakly smaller sum of ranks than that of DA, but that need not be rank-efficient or without justified envy. 

\section{Results}
\label{sec:results}
Our theoretical results relate to the properties of the expected allocation generated by RM, TTC and DA when students' preferences and schools' priorities are drawn independently and uniformly at random. This assumption is commonly used to analyze matching markets.\footnote{See \cite{che2018payoff} and references therein.} We study the asymptotic behavior of: i) expected average rank (efficiency), ii) expected maximum rank (inequality), and iii) expected number of students with justified envy generated by RM, TTC and DA in the next subsections.

\paragraph{Efficiency} We first study the expected average rank generated by RM, TTC and DA in random markets. To do so, we define $\overline{x} \coloneqq \frac{1}{n} \sum_{i=1}^n \rk_i(x_i)$, which denotes the average rank of the school to which students are assigned in allocation $x$.  

Proposition \ref{efficiency} shows that the expected average ranking in RM is smaller (i.e. better) than that in TTC and DA. It follows directly from a result by \cite{parviainen_2004} that has not yet been cited in the economics literature. In contrast, the results for DA and TTC are well-known and we simply restate them for completeness.

\begin{proposition}
	\label{efficiency}
	The expected average rank in RM, TTC and DA is:
	
	\begin{equation}
	\label{eq:avgrm}
	\lim _{n \to \infty} \EE [\overline x^\rmm] \leq 2
	\end{equation}
	
	\begin{equation}
	\label{eq:avgttc}
	\lim _{n \to \infty} \frac{\EE [\overline x^\ttc]}{\log n} = 1
	\end{equation}
	
	\begin{equation}
	\label{eq:avgda}
	\lim _{n \to \infty} \frac{\EE  [\overline x^\da ]}{\log n}=1
	\end{equation}
	
\end{proposition}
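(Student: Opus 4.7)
The plan is to handle the three claims separately, since they arise from different mechanisms, while exploiting a common probabilistic framework. Under the uniform-random preferences assumption, the $n \times n$ rank matrix $C$ defined by $c_{ij} = \rk_i(s_j)$ has independent uniformly random permutations of $\{1,\ldots,n\}$ as its rows.

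For (\ref{eq:avgrm}), the key step is a reduction to the random assignment problem. By definition, for any allocation $x$ corresponding to a permutation $\pi$ matching students to schools, $\sum_i \rk_i(x_i) = \sum_i c_{i,\pi(i)}$, so the rank-minimizing allocation coincides with a minimum-cost assignment in $C$. Thus $\EE[\overline{x}^\rmm] = \frac{1}{n}\,\EE\bigl[\min_\pi \sum_i c_{i,\pi(i)}\bigr]$. This "random permutation" variant of the assignment problem was analyzed by \cite{parviainen_2004}, whose bound on the expected minimum cost divided by $n$ yields (\ref{eq:avgrm}) directly. No additional probabilistic work is needed beyond verifying that Parviainen's cost model is exactly the rank matrix obtained under uniform preferences.

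For (\ref{eq:avgda}) and (\ref{eq:avgttc}), I would invoke classical asymptotic analyses of the two mechanisms in balanced random markets with independent uniform preferences. The DA statement is a direct consequence of Pittel's analysis of the stable marriage problem, which shows that the expected rank of the matched partner on the proposing side is $(1+o(1))\log n$; applied here with students as proposers, this gives (\ref{eq:avgda}). The TTC statement follows from the asymptotic analysis of top trading cycles under uniform preferences in a one-to-one market, already established in the literature cited in Section~2 (notably \cite{che2018payoff} and the equivalence results therein), which yields the same $\log n$ scaling. Since both results are well known, the proof reduces to citing them.

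The only non-trivial obstacle is Parviainen's bound for RM. It rests on a delicate probabilistic analysis of minimum-weight matchings in a random weighted bipartite graph with permutation weights, rather than elementary combinatorics, and is not at all straightforward to reproduce. Fortunately, once it is invoked as a black box, the reduction to the random assignment problem is immediate, and the proposition follows at once by combining the three cited results.
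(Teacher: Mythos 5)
Your proposal is correct and takes essentially the same route as the paper: all three limits are established by citation, with the RM bound obtained exactly as you describe by identifying the rank matrix with Parviainen's permutation-cost assignment problem, and the DA result taken from Pittel. The only point to tighten is the TTC claim: the paper cites Knuth's exact formula $\EE[\sum_i \rk_i(x_i^\ttc)] = (n+1)H_n - n$ (equivalently, the RSD average rank) rather than the Che--Tercieux equivalence results, and that is the safer reference here because the payoff equivalence in \cite{che2018payoff} concerns \emph{normalized} payoffs and does not by itself yield a statement about expected ranks --- indeed, the breakdown of that equivalence at the level of ranks is the central point of this paper.
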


\begin{proof}
	Statement \ref{eq:avgrm} is proven by \citet[][Theorem 2.1, p. 105]{parviainen_2004}.\footnote{\cite{parviainen_2004} also provides a lower bound, and thus the expected average rank in RM is such that $\pi^2/6 \approx 1.65\leq \lim _{n \to \infty}  \EE [\overline x^\rmm] \leq 2$. }		
	Statement \ref{eq:avgttc} is proven by \citet[][equation 4, p. 439]{knuth1996}.\footnote{Knuth shows that $\EE[\sum_{i=1}^n \rk_i(x_i^\ttc)]=(n+1)H_n-n$, where $H_n$ is the $n$-th harmonic number and, therefore, $\lim_{n \to \infty} \frac{1}{n} \EE[\sum \rk(x_i^\ttc)] = \log (n)$. See also the note after the acknowledgements in \cite{frieze1995probabilistic}, p. 807. Even though Knuth focuses on matching with endowments, in which each agent owns one object, note that the preferences of each agent over his endowment are random, i.e. the object that an agents owns need not be his first-ranked object. The equivalence between matching with random ownership and without ownership has also been explored by \cite{abdulkadiroglu1998}.} 
	Statement \ref{eq:avgda} is proven by \citet[][Theorem 2, p. 538]{pittel1989}.
	
\end{proof}

Proposition \ref{efficiency} shows that the rank inefficiency of DA and TTC does not vanish as the market grows large because, even if the average rank obtained by DA and TTC grows slowly with the size of the market, the average rank obtained by RM is constant and does not grow with $n$. 

\paragraph{Inequality}  We measure inequality as the rank of the object obtained by the worst-off agent in the market, i.e. the maximum rank in the rank distribution. This measure follows John Rawls' idea that the welfare of a society is that of its worst-off member.\footnote{Alternatively, one could define inequality as the difference in ranks between the worst- and best-off agent. Because the rank of the object obtained by the best-off agent is 1 in any Pareto optimal allocation \citep{abdulkadiroglu1998}, both measures are equivalent.} To do so, we define $\underline{x} \coloneqq \max_{i} \rk_i(x_i)$, which denotes the rank of the object obtained by the worst-off agent in allocation $x$. 

Proposition \ref{inequality} shows that RM generates a significantly more egalitarian allocation than DA and TTC. In particular, TTC generates an allocation so unequal that the worst-off student  is assigned to a highly undesirable school in the lower half of their preference list. Such rank is much higher than the corresponding value for RM ($\log_2(n)$) and DA ($\log^2(n))$. 

\begin{proposition} The expected maximum rank of RM, TTC and DA is:
	\label{inequality}

\begin{equation}
	\label{eq:maxrm}
\lim _{n \to \infty} \frac{\EE [\underline x^\rmm]}{\log_2 (n)} = 1
\end{equation}	

\begin{equation}
	\label{eq:maxttc}
\lim _{n \to \infty} \frac{\EE [\underline x^\ttc]}{n} > 0.5
\end{equation}

\begin{equation}
	\label{eq:maxda}
\lim _{n \to \infty} \frac{\EE [\underline x^\da ]}{\log^2 (n)}=1
\end{equation}

\end{proposition}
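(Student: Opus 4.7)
The plan is to address the three limits in Proposition \ref{inequality} separately, since each mechanism requires different machinery.

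For the RM statement \ref{eq:maxrm}, I would build on Parviainen's limit theorem for the distribution of an individual student's rank in the rank-minimizing assignment. That theorem yields an asymptotic tail of geometric type, roughly $\Pr[\rk_i(x_i^\rmm) > k] \approx 2^{-k}$ as $n \to \infty$. Given such a marginal tail, the standard extreme-value heuristic for $n$ nearly-independent geometric random variables predicts $\EE[\underline x^\rmm] \sim \log_2 n$. To make this rigorous I would (i) apply a union bound to obtain $\Pr[\underline x^\rmm > (1+\varepsilon)\log_2 n] \to 0$, and (ii) use a second-moment or Poissonization argument, or a comparison with the bipartite random assignment model, to establish that with high probability at least one student has rank of order $\log_2 n$, which delivers the matching lower bound and hence the stated ratio limit.

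For the TTC statement \ref{eq:maxttc}, I would exploit the cycle structure of TTC under uniformly random preferences and priorities. Conditioning on the set of students and schools that remain at each round, the restricted preferences are still uniformly random permutations, which allows a clean recursion for the ranks obtained by the students removed late in the process. Combined with the well-known fact that a non-vanishing fraction of students survives into the late rounds, this shows that the worst-off student's assigned school is roughly a uniform rank on $\{1,\dots,n\}$ within their original preference list. A direct calculation (or a coupling with the random serial dictatorship analysis in \cite{manea2009asymptotic}) then yields $\EE[\underline x^\ttc]/n$ converging to a constant strictly greater than $1/2$, which is the weaker lower bound required.

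For the DA statement \ref{eq:maxda}, I would invoke the classical random-market analysis of \cite{pittel1989} and its refinements. The core quantity is the total number of proposals made during DA, which concentrates around $n\log n$; a tail bound on the maximum number of proposals any single student makes, obtainable via a ballot / large-deviation argument, yields $\underline x^\da = (1+o(1))\log^2 n$ with probability tending to one. A simple truncation using the trivial upper bound $n$ then converts this high-probability statement into the stated limit of expectations.

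The main obstacle I expect is the RM case. Parviainen's theorem supplies the \emph{marginal} distribution of a single rank, but the $n$ individual ranks within an optimal assignment are strongly correlated: knowing that one student has an unusually high rank changes the conditional distribution of the others, because globally the sum of ranks is being minimized. Transferring a marginal geometric tail into a clean extreme-value statement therefore requires either a decorrelation argument or an appeal to the richer joint structure of the bipartite random assignment problem. The TTC and DA bounds, by contrast, are essentially corollaries of well-established machinery in the random matching literature.
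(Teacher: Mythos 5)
Your overall strategy matches the paper's. For the RM bound the paper does exactly what you propose in spirit: it takes Parviainen's asymptotic marginal $\Pr[\rk_i(x_i^{\textup{RM}})=k]\to 2^{-k}$ (Theorem 1.3 of \cite{parviainen_2004}) and converts the geometric tail into a maximum of order $\log_2 n$ by analogy with the longest run of heads in $n$ coin tosses, citing \cite{schilling2012surprising}. The obstacle you flag --- that the $n$ ranks in a rank-minimizing assignment are correlated, so the marginal tail does not by itself justify the extreme-value conclusion --- is genuine, and it is worth noting that the paper does not resolve it either: the published argument is precisely the heuristic comparison you describe wanting to make rigorous (the union bound gives the upper direction from marginals alone, but the matching lower bound does require the decorrelation or second-moment step you identify, which is left implicit). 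For TTC and DA the paper does not rederive anything: it cites \citet{knuth1996} (p.~440) for the $>0.5\,n$ lower bound on the TTC maximum and \citet{pittel1992likely} (Theorem 6.1) for the $\log^2 n$ behaviour of the DA maximum. Your proposed derivations (cycle-structure recursion for TTC, proposal-count concentration for DA) are plausible reconstructions of those cited results, but as written they are sketches of known theorems rather than new arguments; if you carried them out you would obtain a more self-contained proof than the paper's, at the cost of substantial additional work. Net assessment: same route, with your version being more explicit about the one step the paper treats heuristically.
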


\begin{proof}
Statement \ref{eq:maxttc} was proven by \citet[][p. 440]{knuth1996}.\footnote{\cite{knuth1996} shows that, in a serial dictatorship, the expected rank of the last dictator is $n/2$ (this is easy to see, as the last dictator has only one object to choose, and the expected rank of such object is exactly in the half of his preference list). Thus, when taking the maximum over the expected rankings of each dictator, the maximum must be greater than $n/2$. It is well-known that RSD is equivalent to TTC with random endowments, and the result follows.} Statement \ref{eq:maxda} was proven by \cite{pittel1992likely}, theorem 6.1, p. 382 and note before references, p. 400.

To prove statement \ref{eq:maxrm} we use the asymptotic rank distribution in RM. The probability that a student is assigned to their $i$-th choice is asymptotically equal to $\frac{1}{2^i}$ (Theorem 3 in \cite{aldous2001zeta}; see also Theorem 1.3 in \cite{parviainen_2004}), so that a student is assigned to a school with rank 1 with probability 1/2, to a school with rank 2 with probability 1/4, to a school with rank 3 with probability 1/8, to a school with rank 4 with probability 1/16, and so on. Such distribution is almost identical to that observed in a different problem, namely the number of consecutive heads in $n$ independent coin tosses, in which 0 heads obtains with probability 1/2, 1 heads with probability 1/4, 2 heads with probability 1/8, 3 heads with probability 1/16 and so on (the distribution of consecutive heads is simply that of expected ranks, but shifted by +1). The expected maximum in the latter problem (i.e. the longest run of heads) is known to be $\frac{\log(n/2)}{\log(2)}$  \citep{schilling2012surprising}. Therefore, the maximum rank in RM is equal to  $\frac{\log(n)/2}{\log(2)}+1=\log_2(n)$.\footnote{\citet[][Theorem 2, p. 1436]{frieze2007probabilistic} prove a similar result: the maximum cost in the cost assignment problem when costs are uniformly distributed in $[0,n]$ is  in the order of $\log(n)$.} 
\end{proof}

Although we only provide a lower bound for the maximum rank in TTC (of $0.5 \, n$), simulations suggest that the maximum rank in TTC converges to $0.63 \, n$.

\paragraph{Justified Envy} We use $e^\rmm, e^\ttc$ and $e^\da$ to denote the fraction of students who experience justified envy in the allocation obtained in RM, TTC and DA, respectively. Proposition \ref{envy} shows that RM generates fewer cases of expected envy than TTC, which is interesting since TTC is envy minimal in the class of strategy-proof and Pareto optimal mechanisms \citep{aeri}.

\begin{proposition}
	\label{envy}
	 The expected fraction of students with justified envy in RM, TTC and DA is:

\begin{equation}
	\label{eq:envyrm}
\lim _{n \to \infty}\EE [ e^\rmm] = 0.33	
\end{equation}

\begin{equation}
	\label{eq:envyttc}
	\lim _{n \to \infty} \EE [ e^\ttc] = 0.3863  
	\end{equation}
	
	\begin{equation}
		\label{eq:envyda}
\lim _{n \to \infty} \EE [ e^\da ]= 0
\end{equation}
\end{proposition}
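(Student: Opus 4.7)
The plan is to prove the three statements in order of increasing difficulty.

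Statement \ref{eq:envyda} is immediate: DA produces a stable allocation, and stability rules out blocking pairs by definition. If student $t$ had a justified-envy target school $s$, then $(t,s)$ would be a blocking pair of $x^\da$, contradicting stability. Hence $e^\da = 0$ in every realization, so the expectation is $0$.

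For statement \ref{eq:envyrm}, the central observation is that priorities $\triangleright$ play no role in computing $x^\rmm$ but fully determine who has justified envy. Condition on a representative student $t$ receiving rank $k$ under RM. Then $t$ strictly prefers exactly $k-1$ schools to $x_t^\rmm$; for each such school $s$, the event ``$t \triangleright_s x_s^\rmm$'' is governed by $\triangleright_s$, a uniform random permutation drawn independently of the preferences (and hence of the allocation, since RM is priority-free). Thus $\Pr[t \triangleright_s x_s^\rmm] = 1/2$, and since the priority lists $\triangleright_s$ are independent across $s$, the $k-1$ comparisons are mutually independent. Combined with the asymptotic marginal rank distribution $\Pr[\rk_t(x_t^\rmm)=k] \to 2^{-k}$ from Theorem 1.3 of \cite{parviainen_2004}, this gives
\begin{equation*}
\Pr[t \text{ has no justified envy} \mid \rk_t(x_t^\rmm)=k] \to \bigl(\tfrac{1}{2}\bigr)^{k-1},
\end{equation*}
so that
\begin{equation*}
\lim_{n\to\infty}\EE[e^\rmm] \;=\; 1 - \sum_{k=1}^{\infty} 2^{-k}\cdot 2^{-(k-1)} \;=\; 1 - \tfrac{2}{3} \;=\; \tfrac{1}{3} \approx 0.33.
\end{equation*}

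For statement \ref{eq:envyttc}, I would follow the same template but substitute TTC's asymptotic rank distribution (extractable from the Knuth analysis already invoked in the proof of Proposition \ref{efficiency}) for RM's. The analogous series $1 - \sum_{k\ge 1} \Pr[\rk^\ttc = k]\cdot 2^{-(k-1)}$ then evaluates numerically to $0.3863$.

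The main obstacle is the conditional-independence step in the TTC case. For RM the argument is clean because the allocation is a measurable function of preferences alone, so the priorities can be revealed last, yielding genuine independent coin flips. In TTC, however, priorities drive the formation of trading cycles, so the identity of $x_s^\ttc$ for a school $s$ that $t$ prefers is entangled with $\triangleright_s$. Making the reduction rigorous requires either (a) an exchangeability argument showing that, asymptotically, the identity of $x_s^\ttc$ is uniform over $T\setminus\{t\}$ and independent of $t$'s rank within $\triangleright_s$, or (b) a coupling that reveals the coordinate $\triangleright_s(t)$ after the TTC run on a market with $t$'s priority position resampled. Controlling the correlations induced by the single joint priority matrix driving all schools simultaneously is the technical crux; the RM calculation, by contrast, requires essentially no work beyond Parviainen's distributional result.
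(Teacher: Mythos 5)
Your treatments of DA and of RM are correct and essentially identical to the paper's: DA follows from stability, and for RM the paper makes exactly your ``reveal the priorities last'' argument --- the allocation is a function of preferences alone, so conditional on rank $k$ the $k-1$ priority comparisons are independent fair coins, and combining $2^{-(k-1)}$ with Parviainen's limiting rank distribution $2^{-k}$ gives $1-\sum_k 2^{-(2k-1)} = 1/3$.

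The TTC part, however, contains a genuine gap that you correctly diagnose but do not close. As you note, the conditional no-envy probability $2^{-(k-1)}$ is not justified for TTC, because the priorities $\triangleright_s$ both drive cycle formation and determine who $x_s^\ttc$ is; your proposed series $1-\sum_k \Pr[\rk^\ttc=k]\,2^{-(k-1)}$ is therefore not a proof but a conjecture whose key independence step you explicitly leave open. The paper's route around this is the missing idea: it invokes the result of \cite{che2017top} that the number of students with justified envy in TTC is asymptotically equivalent to that in random serial dictatorship, and then performs the computation for RSD, which --- like RM --- ignores schools' priorities entirely, so the coin-flip argument is again legitimate. Concretely, the paper uses Knuth's formula $p_{k,j}=\frac{1}{n!}\binom{k-1}{j-1}(j-1)!(n-j)!(n+1-k)$ for the probability that the $k$-th dictator receives his $j$-th choice, weights by $2^{-(j-1)}$, and shows by an explicit double-sum manipulation (Hockey-stick identity, then recognizing the Taylor series of $-\ln(1/2)$) that the no-envy fraction converges to $2+2\ln(1/2)\approx 0.6137$, whence $e^\ttc\to 0.3863$. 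So the number you quote is right, but to turn your sketch into a proof you need (i) the TTC--RSD justified-envy equivalence as the bridge to a priority-free mechanism, and (ii) the Appendix~A computation (or an equivalent evaluation of the limit) rather than a numerical evaluation of an unjustified series.
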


\begin{proof}
	Statement \ref{eq:envyda} is well-known, as DA does not generate justified envy \citep{gale1962}. 
	
	For the remainder of the proof we use the fact that the number of students with justified envy in TTC ($e^\ttc$) and RSD ($e^\rsd$) is asymptotically equivalent \citep{che2017top}. Since schools' priorities are irrelevant in both RM and RSD, a student who is assigned to their $i$-th most preferred school does \emph{not} experience justified envy with probability $\frac{1}{2^{i-1}}$. To see this, notice that students placed into their 1st choice trivially do not experience justified envy with probability 1; students placed into their second best choice do not experience justified envy if the student who is accepted at his most preferred school has a higher priority than them, which occurs with probability 1/2; for students who are assigned to their third choice, they do not experience justified envy if their first and second most preferred school rank their assigned student above them, i.e. with probability $1/4$, and so on. 
	
	Thus, to obtain the total fraction of students who do not experience justified envy in RM and RSD (TTC), we just need to multiply i) the probability that a student matched to their $i$-th most preferred school does not experience justified envy, times ii) the fraction of students who are assigned to such a choice in RSD and RM. The fraction of students assigned to their $i$-th choice in RM asymptotically equals $\frac{1}{2^i}$ (Theorem 1.3 in \cite{parviainen_2004}), whereas in RSD the probability that the $k$-th dictator is assigned to his $j$-th most preferred school is given by $p_{k,j}=\frac{1}{n!}\binom{k-1}{j-1}(j-1)!(n-j)!(n+1-k)$.\footnote{See the derivation of this expression in the Appendix.} Putting these expressions together, and after some algebra for the RSD case detailed in the Appendix, we obtain:
	
\begin{equation}
e^\rmm=1- \sum_{i=1}^n \frac{1}{2^i} \times \frac{1}{2^{i-1}}=1 -  \sum_{i=1}^n\frac{1}{2^{2i-1}}\rightarrow 0.33 
\end{equation}

\begin{equation}
e^\rsd = 1- \frac{1}{n} \sum_{i=1}^n \sum_{j=1}^i\frac{1}{n!}\binom{k-1}{j-1}(j-1)!(n-j)!(n+1-k) \frac{1}{2^{i-1}} \rightarrow 0.3863 
\end{equation}	
	
	which finalizes the proof, since $\lim _{n \to \infty} \EE [ e^\ttc]  = \lim _{n \to \infty} \EE [e^\rsd].$
\end{proof}

\section{Data}
\label{sec:data}
One critique that can be made to our random market results is that they assume that students' preferences are independent, whereas students' preferences tend to be correlated, and such correlation may improve the performance of DA and TTC with regards to efficiency and equality. We show that this is not the case by using real-life data from secondary school admissions in Hungary in 2015. In summary, we find that TTC and DA perform even worse than when we assumed independent uniform preferences.

Our data contains the preferences and priorities of 10,131 students and 244 schools in Budapest. Because Hungary assigns students to schools using DA \citep{biro2008student}, we consider the reported preferences as truthful and apply DA, TTC, and RM to the reported preferences and priorities.\footnote{Because reported preferences lists are short (students rank only 4 schools on average), we also conduct a counterfactual analysis using the preferences estimated by \cite{aue2020happens}. We find an even larger efficiency and equality gap between RM and DA and TTC. We discuss these findings in the Appendix.}
When a student only ranks $k$ schools, we use $k+1$ as the rank of being unassigned. RM chooses the rank minimizing assignment randomly among all rank-efficient allocations. Figure \ref{fig:budapestreported} present the rank distribution realized under RM, DA, EADA and TTC using one RM allocation taken at random. Table \ref{tab:table1budapest} presents summary statistics averaging over 30 rank-minimal allocations.\footnote{There is minimal variance on the maximum rank across RM allocations (recall the theoretical result on footnote 9). We take the average over 30 rank-minimal allocations only because finding all rank-minimal allocations is computationally intensive.}
\begin{table}[h!]
	\centering
	\caption{Rank descriptive statistics for Budapest. Standard deviation for RM in parenthesis.}
	\label{tab:table1budapest}
	\begin{tabular}{lcccc}
		\toprule
		Variable $\backslash$ Mechanism	  	   & RM 		& TTC    		& DA & EADA	    \\
		\midrule
		Mean      		& 1.48 & 1.9 & 2.1 & 2.0    \\
 & \tiny (0.00) &  & &\\
				Maximum      	 &5.72 & 14 & 13  & 13	\\
				 & \tiny (0.45) &  & &\\
				 
		Share of students   & 0.46 & 0.45 &0 & 0.26 \\
		w. justified envy  &   \tiny (0.01)	&  		& & \\
				Blocking pairs    & 8,563 & 7,177 & 0 & 3,393 \\
								 & \tiny (29.33) &  & &\\
		Unassigned students & 2,559 & 2,508 & 2,704	& 2,704	 \\
		 & \tiny (4.64) &  & &\\
		\bottomrule
	\end{tabular}

	\footnotesize Note: The mean is computed dividing by the number of assigned students.\hfill{ }
\end{table}

The lessons we learn from computing the rank distributions in Budapest are similar to those we learned from looking at random markets. Table \ref{tab:table1budapest} shows that RM performs better than TTC and the currently used DA with regards to efficiency and equality. RM generates a better average rank (1.5) than TTC (1.9), DA (2.1) and EADA (2.0). RM assigns the average student to a school in their 16 percentile of their preference lists, whereas the corresponding percentile for DA and TTC are 35 and 29, respectively.
\begin{figure}[h!]
	\centering
	\includegraphics[width=.7\textwidth]{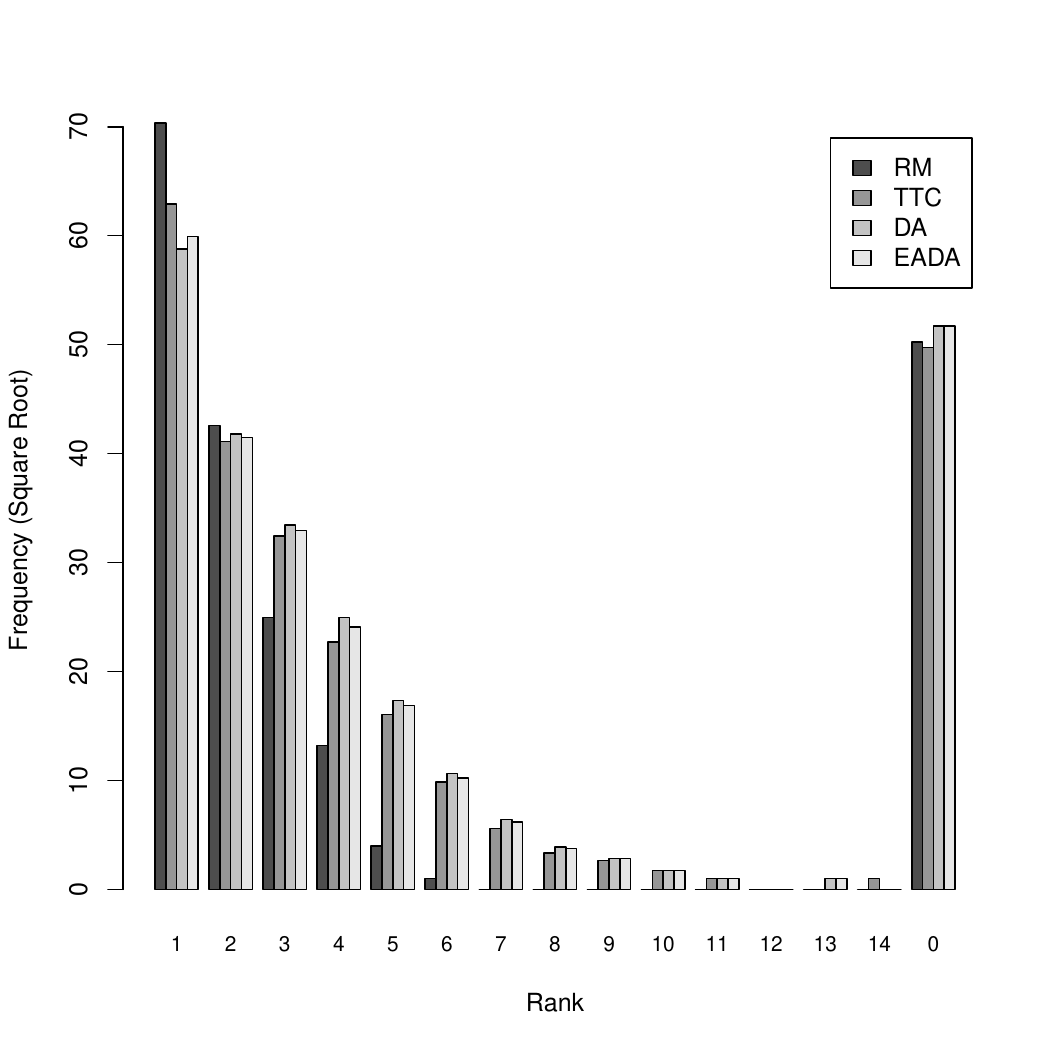} 
	\caption{Rank distribution generated for 10,131 students in the secondary school admissions in Budapest using reported preferences. The last bar (0) denotes unassigned students.}
	\label{fig:budapestreported}
\end{figure}

With regards to inequality, RM performs much better than DA, EADA and TTC, assigning the worst-off student to the 6th best choice rather than their 13th or 14th best. We find that DA and TTC are incomparable in terms of equality, since TTC assigns more students to a really undesirable school, but also assigns more students to a top school. The number of students unassigned in DA and EADA is higher than in RM (2,704 versus 2,558), which in turn is higher than in TTC (2,508)

We find that the rank distribution observed in EADA is only minimally better than one observed in DA (the average rank reduces from 2.1 to 2, whereas the maximum rank remains at 13, see Table \ref{tab:table1budapest} and Figure \ref{fig:budapestreported}). The rank distribution observed in EADA is significantly worse than the one produced by RM.
Our results suggests that EADA moderately improves the inefficiency of DA but fails to achieve the efficiency of RM. Moreover, EADA does not reduce the large inequality generated by DA.

Regarding justified envy, we observe that TTC and RM generate justified envy in roughly similar fraction of students (45\% and 46\%, respectively). One difference between TTC and RM is that, while they generate justified envy in the roughly the same number of students, RM may generate envy in students with higher priorities (e.g. students with good grades). We observe some evidence supporting this hypothesis. TTC and RM generate justified envy in almost the same number of students, but we observe 20\% more blocking pairs in RM than in TTC. EADA generates justified envy in significantly fewer students (26\%). 

The rank distributions are similar to those documented in other studies. \cite{che2018payoff} and \cite{aeri} also document that TTC assigns more students to their first choice than DA. Both studies also find that DA and TTC generate a similar number of unassigned students. 

Our empirical analysis uses the preferences that students submit in DA to generate the TTC and RM allocations. A concern is that students would submit different preferences when allocations are determined by RM, which is not strategy-proof. To mitigate this concern, we compute the rank distribution generated by RM when a fraction of the students who have incentives to misrepresent their preferences do so (from 20\% to 80\%). We find that the rank distribution and number of students with justified envy remain largely unchanged, even when a fraction of agents misrepresent their preferences.\footnote{The average rank varies minimally because the decrease in rank from the students who misrepresent their preferences is almost perfectly counteracted by the decrease in rank of the students who become worse off; both changes are usually small, around 1 or 2 ranks. See the Appendix for detailed summary statistics. A detailed analysis of the scope of manipulations in the rank-minimizing mechanism remains an interesting open question.}

In our view, it is unclear whether students would misrepresent their preferences in RM. The potential gains from manipulation are tiny (the average student can only improve by less than one rank in their preference list with iid preferences, and by less than 2 ranks in the data), and manipulations are risky and could lead to worse outcomes. Moreover, RM is not obviously manipulable, and thus cannot be manipulated by cognitively limited agents \citep{troyan2022non}. Furthermore,  there is evidence of high truth-telling rates in not obviously manipulable mechanisms \citep{cerrone2022school}.\footnote{\cite{cerrone2022school} find that almost twice as many people (70\% versus 40\%) behave truthfully in the efficiency adjusted deferred acceptance (EADA) mechanism versus standard DA, even though DA is strategy-proof and EADA is not.} Conducting an experiment comparing RM, DA and TTC would clarify how strong is our assumption of truthful behavior in RM. We leave this interesting question for future research.

\section{Conclusion and Open Questions}
Our paper does not argue that strategy-proofness should be abandoned as key desiderata in school choice. Strategy-proofness is a desirable property and has a clear justification in terms of levelling the playing field across sophisticated and unsophisticated applicants. Thus, it is not unreasonable for policymakers to use strategy-proof mechanisms such as deferred acceptance and top trading cycles. However, our paper makes the case that academics and policymakers should be aware that strategy-proofness involves significant costs in terms of efficiency and equality.

Our paper also highlights the remarkable properties of the rank-minimizing mechanism, which has received little attention in the literature.  Its outstanding efficiency and equality properties in theory and practice are strong arguments for its use in some situations, particularly when schools' priorities are random lotteries, as in Brighton and Hove \citep{allen2013short} and Amsterdam \citep{oosterbeek2020using}.\footnote{The UK official School Admissions Code 2007 and the report by \cite{coldron2008secondary} propose the use of lotteries to mitigate segregation; see also \cite{basteck2021lotteries}. In Northern Ireland, secondary schools' priorities are not allowed by law to depend on academic selection \citep{brown2021rise}. In England, a small but growing number of schools use lotteries as the main admissions criterion \citep{noden2014banding}.}  We also conjecture that RM may generate less segregated allocations than those generated by DA or TTC because it does not use schools' priorities.\footnote{There is evidence that school choice increases segregation by ethnic and family background, e.g. \cite{soderstrom2010school}.} On the other hand, potential constraints to implementing the rank-minimizing mechanism include its non strategy-proofness, its lack of stability and a lack of transparent description of the mechanism, which could be a problem in places where the trust in the corresponding Education Authority is low. 

In this paper, we have shown the efficiency and equality costs that arise when using two prominent strategy-proof mechanisms. An interesting open question is: what is the minimum loss that can arise in \emph{any} strategy-proof mechanism? In other words, what is the smallest expected average and/or maximum rank generated by a strategy-proof mechanism in iid random markets? We conjecture that no strategy-proof mechanism significantly improves on TTC, i.e. that no strategy-proof mechanism achieves an expected average rank asymptotically smaller than  $\log(n)$ or an expected maximum rank below $0.5 n$. Any such mechanism, if exists, must lack Pareto optimality.\footnote{A growing literature treats the stability property as binding, and asks for more efficient mechanisms that satisfy it by itself \citep{abdulkadiroglu2021school} or jointly with ordinal dominance incentive compatibility \citep{bodoh2020optimizing}.}

\section*{Note after Acceptance}
After this work was completed, \cite{ortegaziegler} have computed the expected rank generated by EADA in random, iid one-to-one markets, which is in the order of $\log(n!)/n$. Their analysis builds on an alternative implementation of EADA proposed by \cite{tang2014new}.  Their finding, combined with our Theorem 1, shows that the gap between the average ranks in EADA and RM grows as $n$ increases.

\section*{Acknowledgements}
We thank the anonymous referees of this journal and the MATCH-UP 2022 workshop for their valuable suggestions. We also acknowledge helpful comments from Mustafa Afacan, David Delacrétaz, Aytek Erdil, Takashi Hayashi, Yoan Hermstrüwer, Jörgen Kratz, Aditya Kuvalekar, David Manlove, Vincent Meisner, Antonio Miralles, Hervé Moulin, Afshin Nikzad, Juan Sebasti\'an Pereyra, Erel Segal-Halevi, Olivier Tercieux, Peter Troyan, Bertan Turhan, Utku Ünver, Mark Wilson, Bumin Yenmez, Gabriel Ziegler, and audiences at the 2023 Econometric Society European Summer Meeting, Match-UP, the Conference on Economic Design, the Coalition Theory Network Workshop, the Belfast Easter Workshop on School Choice and seminars at ISER Essex, Boston College, the University of Glasgow and NTNU.

We are indebted to Sarah Fox, Taylor Knipe and Abbas Ali Shah for proofreading this paper.
This paper previously circulated under the title ``Improving Efficiency and Equality in School Choice''.
We acknowledge funding from the ESRC, the British Academy and the Leibniz Association, as part of project K125/2018: ``Improving school admissions for diversity and better learning outcomes".

\newpage
\section*{Appendix A - Proof of Proposition 3}
\label{sec:proof}

In RSD the probability that the $k$-th dictator is assigned to his $j$-th most preferred school is given by 
\begin{equation}
\label{eq:ap1}
p_{k,j}=\frac{1}{n!}\binom{k-1}{j-1}(j-1)!(n-j)!(n+1-k)
\end{equation}

Where $\binom{k-1}{j-1}(j-1)!$ denotes the possible combinations in which exactly the $j-1$ most preferred schools by dictator $k$ have been chosen by the previous $k-1$ dictators; $(n-j)!$ denotes the arbitrary combinations in which the schools ranked worse than $j$ can appear in the preferences of the $k$-th dictator, and the term $n-(k-1)$ denotes the possibilities that the $j$ most preferred school for dictator $k$ has not been chosen by the previous $k-1$ dictators. The normalization by $n!$ is to account for the number of all possible preferences profiles. For example, if $k=1$, then $p_{1,1}=1$ (the probability that the first dictator gets his first school is one) and  $p_{1,j}=0$ for any $j>1$ (the probability that the first dictator gets a school worse than his top one is zero). Similarly, when $j=1$, then  $p_{k,1}=\frac{n+1-k}{n}$ (this is the probability that the $k$-th dictator gets his top school, or equivalently, the probability that none of the $k-1$ dictators before him are assigned to the school that dictator $k$ ranks as first). Note that ${n \choose 0} = 1$ and ${n \choose m} =0$ for any $m>n$, so that the probability that dictator $k$ is assigned to a school with a rank higher than $k$ is zero. Using a simplification by \citeauthor{simplification}, equation \eqref{eq:ap1}	can be rewritten as:
\begin{eqnarray}
p_{k,j}&=& (n+1-k) \frac{(k-1)!(j-1)!(n-j)!}{n!(j-1)!(k-j)!} \\
&=& (n+1-k) \frac{(k-1)!(j-1)!(n-j)!}{n!(j-1)!(k-j)!} \frac{k}{k} \frac{j}{j}\\
&=& \frac{(n+1-k)}{k} \frac{\binom{k}{j}}{\binom{n}{j}}
\end{eqnarray}

Since RSD is independent of schools' priorities, a student placed in their $j$-th most preferred school does \emph{not} experience justified envy with probability $\frac{1}{2^{j-1}}$. Therefore, the total number of students without justified envy in RSD $(NE^\rsd)$ equals
\begin{eqnarray}
	NE^\rsd&=&\sum_{k=1}^{n}	\sum_{j=1}^{n} \frac{(n+1-k)}{k} \frac{\binom{k}{j}}{\binom{n}{j}}\frac{1}{2^{j-1}}\\
		&=&\sum_{j=1}^{n}	\frac{1}{\binom{n}{j}2^{j-1}} \sum_{k=j}^{n} \frac{(n+1-k)}{k} \binom{k}{j}\\
				&=&\sum_{j=1}^{n}\frac{1}{\binom{n}{j}2^{j-1}} \left[	 (n+1)\underbrace{\sum_{k=j}^{n} \frac{1}{k} \binom{k}{j}}_{A} -  \sum_{k=j}^{n}  \binom{k}{j} \right]
\end{eqnarray}

Where
\begin{eqnarray}
	A&=&\sum_{k=j}^{n} \frac{1}{k} \binom{k}{j}=\sum_{k=j}^{n} \frac{1}{k} \frac{k!}{j!(k-j)!}=\sum_{k=j}^{n}  \frac{(k-1)!}{j\, (j-1)!(k-j)!}\\
&=&\frac{1}{j} \sum_{k=j}^{n}   \binom{k-1}{j-1}=\frac{1}{j} \sum_{k=j-1}^{n-1}   \binom{k}{j-1}
\end{eqnarray}

Plugging this in our expression for $NE^\rsd$, we have
\begin{eqnarray}
NE^\rsd=\sum_{j=1}^{n}\frac{1}{\binom{n}{j}2^{j-1}} \left[	 \frac{n+1}{j}\sum_{k=j-1}^{n-1} \binom{k}{j-1} - \sum_{k=j}^{n}  \binom{k}{j} \right]
\end{eqnarray}

Using the Hockey-stick identity $\sum_{k=j}^{n} \binom{k}{j} =\binom{n+1}{j+1}$, we obtain

\begin{eqnarray}
	NE^\rsd&=&\sum_{j=1}^{n}\frac{1}{\binom{n}{j}2^{j-1}} \left[	 \frac{n+1}{j}\binom{n}{j} -\binom{n+1}{j+1} \right]\\
&=&\sum_{j=1}^{n}\frac{1}{\binom{n}{j}2^{j-1}} \left[	 \frac{n+1}{j}\binom{n}{j} -\frac{(n+1)\, n!}{(j+1) \, j!(n-j)!} \right]\\
&=&\sum_{j=1}^{n}\frac{1}{\binom{n}{j}2^{j-1}} \left[(n+1)	\left( \frac{1}{j}-\frac{1}{j+1}\right)\binom{n}{j} \right]\\
&=&(n+1)\sum_{j=1}^{n}\left(\frac{1}{2}\right)^{j-1} \left( \frac{1}{j}-\frac{1}{j+1} \right)
\end{eqnarray}

We divide both sides by $n+1$ to obtain
\begin{eqnarray}
	\frac{NE^\rsd}{n+1}&=&\sum_{j=1}^{n}\left(\frac{1}{2}\right)^{j-1} \frac{1}{j} -\sum_{j=1}^{n}\left(\frac{1}{2}\right)^{j-1} \frac{1}{j+1} \\
	&=&2\sum_{j=1}^{n}\left(\frac{1}{2}\right)^{j} \frac{1}{j} -4\sum_{j=1}^{n}\left(\frac{1}{2}\right)^{j+1} \frac{1}{j+1} \\
		&=&2\sum_{j=1}^{n}\left(\frac{1}{2}\right)^{j} \frac{1}{j} -4\sum_{j=2}^{n+1}\left(\frac{1}{2}\right)^{j} \frac{1}{j} \\
				&=&-2\sum_{j=2}^{n}\left(\frac{1}{2}\right)^{j} \frac{1}{j} +1 -\frac{4}{(n+1)}\left(\frac{1}{2}\right)^{n+1}\\
								&=&-2\sum_{j=2}^{n}\left(\frac{1}{2}\right)^{j} \frac{1}{j} +1 -\frac{4}{(n+1)}\left(\frac{1}{2}\right)^{n+1}-1+1\\
												&=&-2\sum_{j=1}^{n}\left(\frac{1}{2}\right)^{j} \frac{1}{j} +2 -\frac{4}{(n+1)}\left(\frac{1}{2}\right)^{n+1}\\
														&=&2-\underbrace{\frac{1}{(n+1)}\left(\frac{1}{2}\right)^{n-1}}_{B}-2\underbrace{\sum_{j=1}^{n}\left(\frac{1}{2}\right)^{j} \frac{1}{j} }_{C}
													\end{eqnarray}
As $n$ goes to infinity, $B$ goes to 0 and $C$ is the Taylor expansion for $-\ln(1/2)$.
\begin{eqnarray}
\lim_{n \to \infty} 	\frac{NE^\rsd}{n}	=\lim_{n \to \infty} 	\frac{NE^\rsd}{n+1}=  2+2\ln(\frac{1}{2})=0.6137
\end{eqnarray}

Thus, the fraction of students who experience justified envy in RSD tends to $e^\rsd = 1-0.6137=0.3863$, which is what we wanted to prove. 

\section*{Appendix B - Simulations}
\label{sec:simulations}
In simulated markets (see Tables \ref{tab:table1} and \ref{tab:table2}), we clearly see that RM dominates TTC and DA in efficiency (average rank) and inequality (maximum rank). Given the large ranks that realize in TTC, it is unsurprising that the rank distribution is large too. The variance of RM is much smaller, which shows that the ranks are heavily concentrated among the first four top choices. Table \ref{tab:table1} also allows us to assess the accuracy of the random market results presented in section \ref{sec:results}. For TTC, the mean rank is surprisingly close to the theoretical prediction ($\pm 1$ of $\log (n)$). In RM, the upper bound provided of 2 for the mean is quite tight, and the approximation $\log_2(n)$ for the max rank is also remarkably accurate.
\begin{table}[h!]
	\centering
	\caption{Rank descriptive statistics. Average over 1,000 simulations.}
	\label{tab:table1}
	\begin{tabular}{lcccc}
		\toprule
		\multirow{1}{*}{	Variable $\backslash$ Mechanism	 } &  RM & TTC   &DA & EADA\\
\midrule
		Mean      			   & 1.8 & 4.3 &5.0 & 	2.7\\
		Max      			     & 6 & 64   	& 23 &	16\\
		Variance   	 		  & 1 & 79  &20	& 7 \\
Blocking pairs				& 41 & 150  & 0		    &  19  \\
		Justified envy				& 30 & 37  & 0		    &  13\\
		\bottomrule
	\end{tabular}
\end{table}

The severity of the inequality generated by TTC is fully exposed in Table \ref{tab:table2}. TTC not only makes someone really worse off, assigning them a really bad object ($0.63 \, n$), but it assigns an object in the bottom 90\% (not top 10\%) of their preferences to over 1.5\% of the agents. In contrast, RM does not assign such a poor option to any agent. RM also assigns more agents to their top choice than TTC. 
\begin{table}[h!]
	\centering
	\caption{Percentage of agents who receive an object with rank higher (worse) than $m$. Average over 1,000 simulations.}
	\label{tab:table2}

		\begin{tabular}{lcccc}
			\toprule
		\multirow{1}{*}{	$m$ $\backslash$ Mechanism	 } &  RM & TTC   &DA & EADA\\
\midrule
			$1$          & 47 & 50 & 79 & 60        \\
			$2$         & 20 & 33 & 63 & 37       \\
			$\log(n)$    & 3  & 19 & 40 & 15      \\
			$0.1\, n$  & 0  & 8   &10& 2     \\
			$0.25\, n$  & 0 & 3    & 0& 0    \\
			$0.5\, n$   & 0 & 1   &0 & 0    \\
			\bottomrule
		\end{tabular}
	
\end{table}

\section*{Appendix C - Strategic applicants}
\label{sec:strategies}

To understand the impact of manipulating students on the rank distribution generated by RM, we compute the RM allocation when a fraction (20\%, 40\%, 60\% and 80\%) of students who have incentives to manipulate do so. We consider two possible manipulations.

\textit{Drop-Assigned}: In the reported preference data from Budapest, 5,183 students are not assigned their first preference in the RM.
A first possible manipulation we consider is for them to move their assigned school to the end of their preference list in the hope of increasing admissions chances at a more preferred school. That is, a student who ranks $s_1 \succ \cdots \succ s_i \succ \cdots \succ s_n$ and gets $s_i$ will perform a manipulation of the form $s_1 \succ \cdots \succ s_n \succ s_i$. This is equivalent to not ranking the school to which the applicant would have been assigned.

\textit{Drop-First}: Another 3,369 students are neither assigned their first nor their second preference in the RM.
A second possible manipulation we consider is for them to move their first preference to the end of their preference list in the hope of increasing admissions chances at their second preference. That is, a student who ranks $s_1 \succ s_2 \succ s_3 \succ \cdots \succ s_n$ and gets $s_3$ or worse will perform a manipulation of the form $s_2 \succ s_3 \succ \cdots \succ s_n \succ s_1$. This manipulation has been observed in real life applications \citep{abdulkadiroglu1998}.

Table \ref{tab:manipulations3} shows that the summary statistics in RM remain largely unchanged in the presence of strategic applicants. The statistics for the RM with a share of 0\% strategic applicants are equivalent to the RM results with reported preferences in Table \ref{tab:table1budapest}. With an increasing share of strategic applicants, the average and maximum rank change marginally, remaining well below the corresponding ranks for DA and TTC. Justified envy remains about the same. The number of unassigned students slightly decreases for the first manipulation, and slightly increases for the second one.

\begin{table}[h!]
	\centering
	\caption{Rank descriptive statistics for RM with strategic applicants.}
	\label{tab:manipulations3}
	\begin{tabular}{lccccc}
		\toprule
		Reported Preferences	  	   &  \multicolumn{5}{c}{Share (number) of strategic applicants} \\
		Variable $\backslash$ RM mechanism 	   & 0\% & 20\% & 40\% & 60\% & 80\% \\
		 									   & (0) 	& (1,036)   & (2,073) & (3,109) & (4,146) \\		
		\midrule
		\multicolumn{6}{l}{\textit{Panel A: Drop-Assigned}}\\
		\midrule
		Mean      			&  1.48  &  1.50  &  1.51  &  1.53  &  1.54   \\
		Maximum      		&  6    	&  7			& 6 &6 & 7    	\\
		Variance   	 		&  0.58 &   0.70  &  0.70  &  0.81  &  0.82 	\\
		Share of students   &  0.46  &  0.45 &   0.45  &  0.46  &  0.46 \\
		w. justified envy  &   	&  		& && \\
		Unassigned students  & 2,555 & 2,599 & 2,590 & 2,636 & 2,625 		 \\
		\midrule
\multicolumn{6}{l}{\textit{Panel B: Drop-First}}\\
\midrule
		Mean      			&  1.48  &  1.50  &  1.52  &  1.54 &   1.56   \\
		Maximum      		&  6    	&  6			& 6 &6 & 6    	\\
		Variance   	 		&  0.58 &   0.61  &  0.63   & 0.65  &  0.67 	\\
		Share of students   &  0.46  &  0.46 &   0.46  &  0.47  &  0.47 \\
		w. justified envy  &   	&  		& && \\
		Unassigned students  & 2,555 & 2,526 & 2,503 & 2,479 & 2,449 		 \\
		\bottomrule
	\end{tabular} 
\end{table}

Figure \ref{fig:manipulations3} presents the rank distributions. Overall, the results show that the rank distributions remain largely unchanged, even for large shares of strategic applicants.

\begin{figure}[h!]
	\begin{subfigure}{.55\textwidth}
		\includegraphics[width=.65\textwidth]{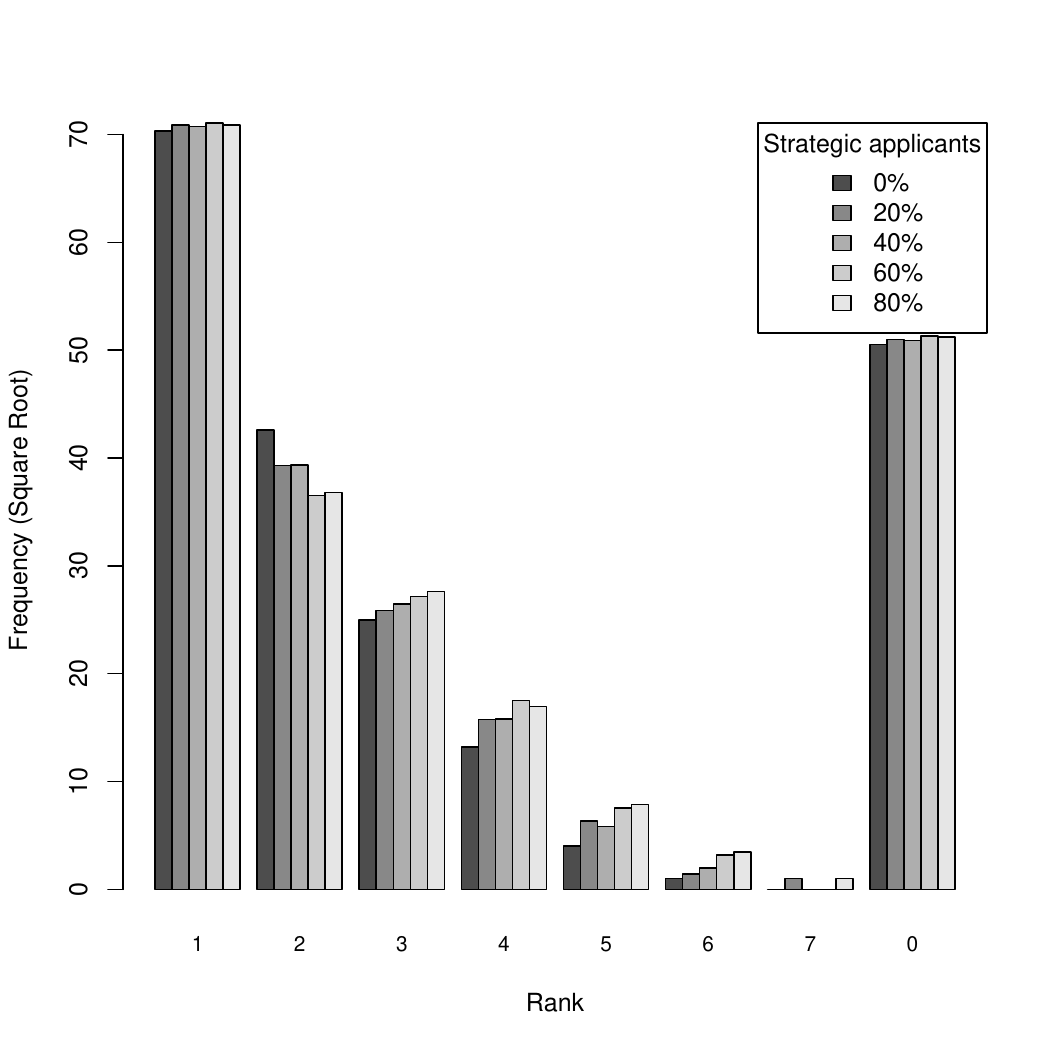}
		\caption{\textit{Drop-Assigned}.}
	\end{subfigure}
\begin{subfigure}{.55\textwidth}
	\includegraphics[width=.65\textwidth]{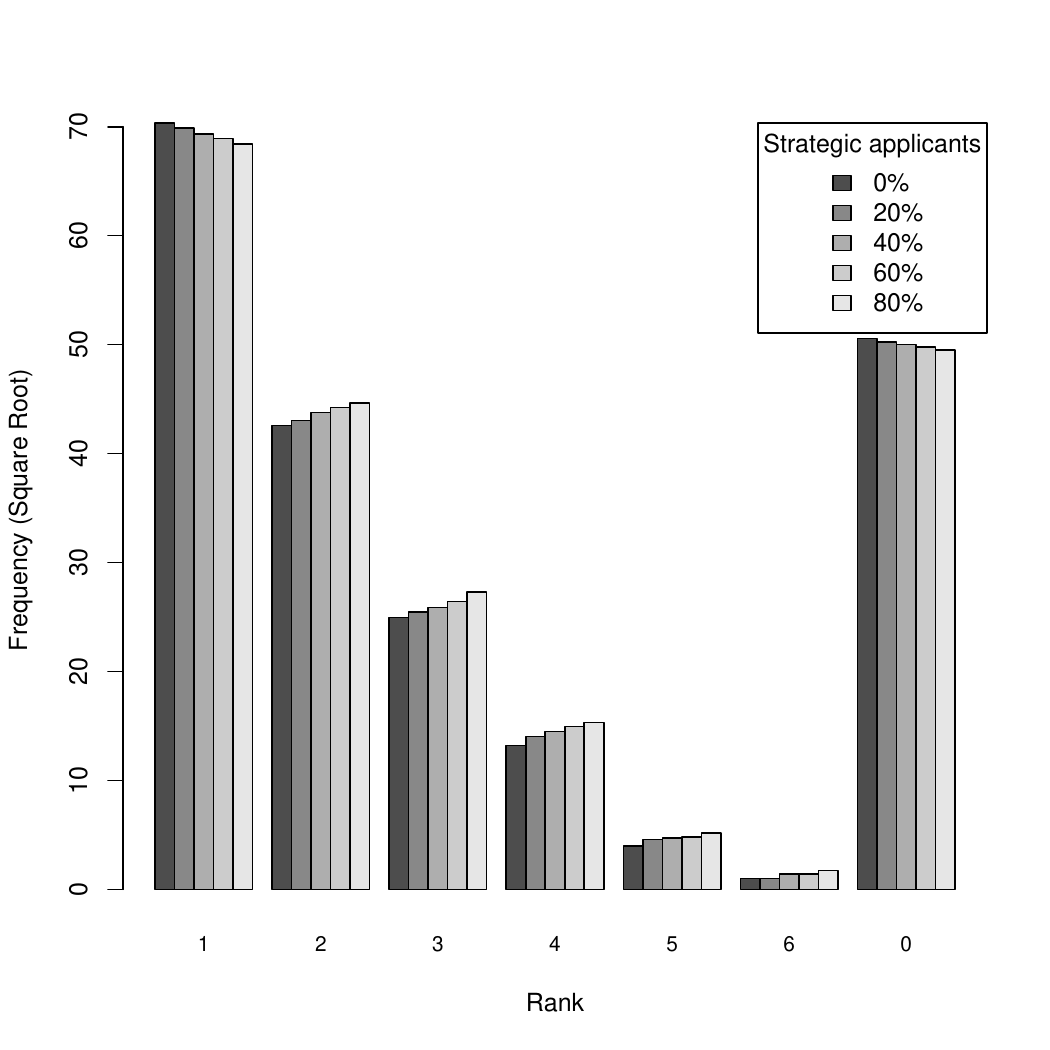}
	\caption{\textit{Drop-First}.}
	\end{subfigure}
	\caption{Rank distribution generated by RM for 10,131 students in the secondary school admissions in Budapest using reported preferences by fraction of strategic applicants. The last bar (0) denotes unassigned students.}
		\label{fig:manipulations3}
\end{figure}

\section*{Appendix D - One-to-One versus Many-to-One}
\label{sec:many}

In the main text, we imposed the assumption that each school has one seat. Here, we relax this assumption and allow each school to have $k$ seats. We find that this change does not affect the conclusions presented in the main text.

Here we assume that each market includes $n$ students, $n/k$ schools and a total of $n$ seats. The priorities of schools over sets of students are responsive, so that a school comparing two assignments that differ in only one student, prefers the assignment containing the more preferred student \citep{roth1985college}. 

In Figure \ref{fig:many} we present the rank distribution generated with $n=100$ students, $k=10$ seats in each school and 10 schools. We observe that the distributions mimic those presented in Figure \ref{fig:histogram}.
\begin{figure}[h!]
	\centering
	\includegraphics[width=0.8\textwidth]{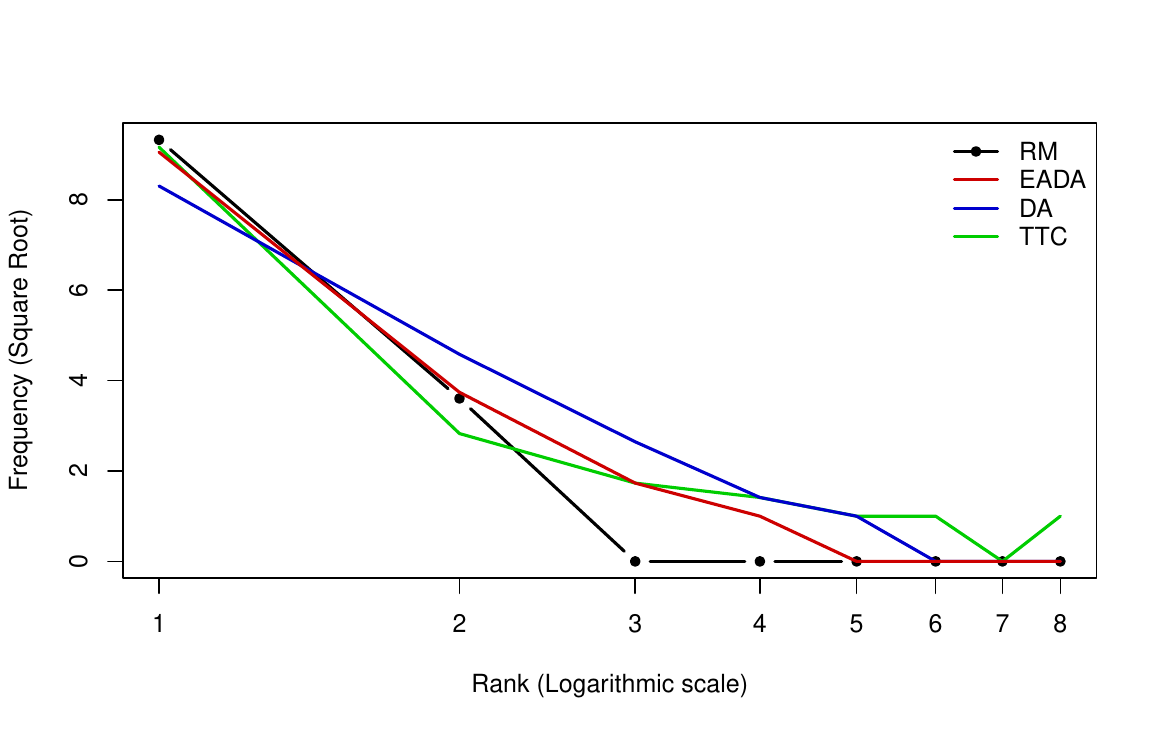} 
	\caption{Rounded average rank distribution in 1,000 iid many-to-one random markets with $n=100$ and $k=10$. \tiny The rank vector obtained in each market is sorted in descending order, and then we compute the average  across all markets coordinate-wise (results are rounded to the nearest integer). The x-axis is truncated at the highest value with positive density. EADA refers to Kesten's efficiency adjusted DA.}
	\label{fig:many}
\end{figure}

In summary, we still observe that the rank distribution generated by RM clearly dominates the ones obtained with DA, EADA and TTC. TTC still generates the most unequal distribution with the largest maximum rank. EADA significantly improves the efficiency of DA, and modestly improves its equality. These simulations suggest that the theoretical results we obtained for one-to-one markets carry over to this many-to-one scenario, with the corresponding values divided by $k$ (except the average rank in RM, which is likely to converge to 1 in this set-up).

\section*{Appendix E - Estimated Preferences}
\label{sec:estimated}

We analyse the allocation generated by DA, EADA, TTC and RM when estimated preferences are used, as in \cite{aue2020happens}. These preferences are complete, meaning that students rank all schools, unlike in our analysis in the main text where students tend to rank just a few (average 4.4), and provide insights into the performance of our four algorithms with longer preference lists. Figure \ref{fig:budapestcomplete} summarizes our findings. 

\begin{figure}[h!]
	\centering
	\includegraphics[width=\textwidth]{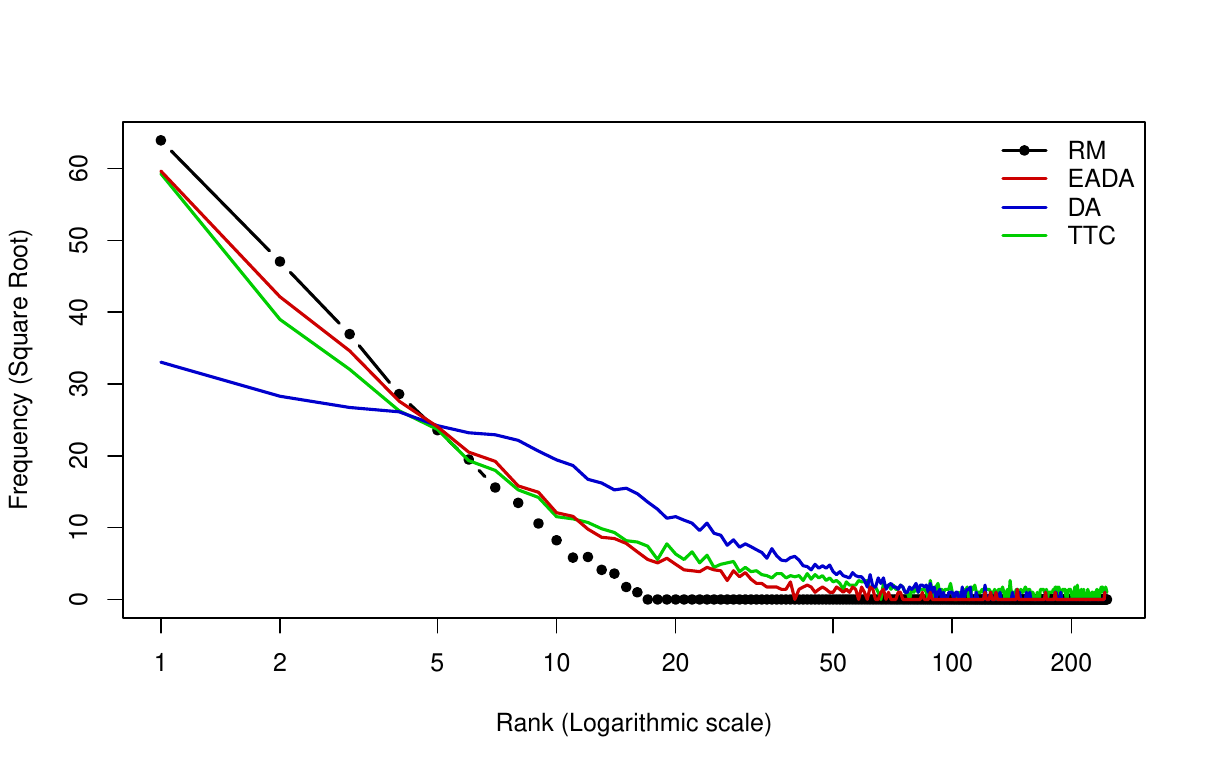} 
	\caption{Rank distribution generated in the secondary school admissions in Budapest. \tiny The number of students and schools is 10,131 and 244, respectively. Thus, the maximum rank is 244. See Section \ref{sec:data} for details.}
	\label{fig:budapestcomplete}
\end{figure}

The average student substantially improves their placement in RM (average rank in RM is 2.7, compared to 8.9 in TTC and 12.3 in DA). With regards to inequality, RM performs much better than DA and TTC with complete preferences, assigning the worst-off student to the 16th best choice rather than to their 241th and 244th, respectively (out of 244). It also assigns less than 2\% of the student population to their 10th ranked school or worse, whereas TTC and DA assign 16\% and 41\% of the student population to such school, respectively.  With estimated preferences, the efficiency improvement of EADA over DA becomes more evident: the average rank decreases from 12.3 to 4.4, although the maximum rank remains unchanged at 241 (out of 244). Either with reported or estimated preferences, the rank distribution observed in EADA is significantly worse than the one produced by RM. With estimated preferences, EADA generates justified envy in almost the same fraction of students as RM and TTC (EADA 56\%, RM 58\%, TTC 64\%), and even more blocking pairs than RM (20,529 in EADA versus 15,441 in RM). 

\end{document}